\documentclass[conference]{IEEEtran}
\IEEEoverridecommandlockouts
\usepackage{cite}
\usepackage{amsmath,amssymb,amsfonts,amsthm}
\usepackage{graphicx}
\usepackage{textcomp}
\usepackage{xcolor}
\usepackage{algorithm}
\usepackage{algpseudocode}
\usepackage[hidelinks]{hyperref}
\def\BibTeX{{\rm B\kern-.05em{\sc i\kern-.025em b}\kern-.08em
    T\kern-.1667em\lower.7ex\hbox{E}\kern-.125emX}}

\newcommand{\RR}{\mathbb{R}}
\newcommand{\ii}{\mathrm{i}}
\newcommand{\dd}{\mathrm{d}}
\newcommand{\Real}{\Re}

\newtheorem{thm}{Theorem}
\newtheorem{defn}{Definition}

\begin{document}
\title{Consistent Spectrogram Separation\\ from Nonstationary Mixture}

\author{\IEEEauthorblockN{Marina Krémé}
\IEEEauthorblockA{\textit{Mines Saint-Etienne, Institut Henri Fayol} \\
Saint-Etienne, France \\
marina.kreme@emse.fr}
\and
\IEEEauthorblockN{Adrien Meynard}
\IEEEauthorblockA{\textit{Univ Lyon, ENS de Lyon, CNRS, Laboratoire de Physique} \\
Lyon, France \\
adrien.meynard@ens-lyon.fr}
}

\maketitle

\begin{abstract}
We present a spectrogram separation method tailored for mixtures comprising two nonstationary components. By exploiting the unique characteristics of their time-frequency representations, we propose an inverse problem formulation to estimate the spectrograms of the components. We then introduce an alternating optimization algorithm that ensures the consistency of the estimated spectrograms. The efficacy of the algorithm is evaluated through testing on synthetic mixtures and is applied to a bioacoustic signal. 
\end{abstract}

\begin{IEEEkeywords}
time-frequency, inverse problem, source separation, consistency,  single-channel
\end{IEEEkeywords}

\section{Introduction}
Spectrogram separation constitutes an essential first step in many applications before performing the desired task. For example, in applications such as sound event detection and localization, speech enhancement, or single-channel audio source separation \cite{Reddy2007, Rafii2011ASM, Hershey2016}, the proposed approaches take a mixture spectrogram as input and generate a mask to apply to the spectrogram for each source being separated. 
Although the primary focus is on audio source separation, the first step is spectrogram separation. We thus define \textit{spectrogram separation as the task of decomposing a spectrogram into several spectrograms based on the different patterns that constitute it}, similar to the texture image decomposition \cite{aujol2006structure}. In this paper, we focus on the spectrogram separation for mixtures of two components. This can be seen as the initial stage in single-channel source separation~\cite{Benaroya2006}, which is 
a major challenge, representing the extreme scenario of underdetermined blind source separation (when the number of sources exceeds the number of observations)~\cite{Hopgood1999}. 
Various methods tackle this challenge leveraging source signal characteristics and application requirements: 
Probabilistic models \cite{Kristjansson, Juang1991, Virtanen2006}, spectral decomposition-based methods, Computational Auditory Scene Analysis (CASA)-based methods~\cite{Drake2009}, and the deep neural network (DNN) approaches \cite{Grais2014}.


Probabilistic models, including Gaussian mixture models (GMM)\cite{Kristjansson, Jang2004}, hidden Markov models (HMM)~\cite{Juang1991,Gales2007}, and factorial HMM~\cite{Virtanen2006, Roweis2000OneMS}, are widely utilized for speaker separation tasks. While GMM and HMM assume constant source energy during separation, limiting real-time performance, factorial HMM models mitigate this constraint but increase computational complexity.
An alternative approach involves spectral decomposition-based methods such as Independent Subspace Analysis (ISA)~\cite{Casey2000SeparationOM} and Nonnegative Matrix Factorization (NMF) \cite{Schmidt2006, Grais2011, Fevotte2018}. 
Independent Subspace Analysis (ISA) is an extension of Independent Component Analysis (ICA)~\cite{Comon2010handbook} for single-channel source separation, aiming to decompose the time-frequency space of a mixed signal into independent source subspaces using Short-Time Fourier Transform (STFT), which often results in cross-spectral terms due to harmonic phenomena and overlapping windows between consecutive time frames. Spectral decomposition methods use the principle of NMF \cite{Grais2011}. However, the energy of real sources can be negative or positive. 
CASA\cite{Drake2009} aims to separate the mixture of sound sources like human ears do. These techniques, however, have difficulty separating instruments with similar tonal characteristics into distinct streams. Recently, DNN-based approaches have shown effective separation in the TF domain for speech signals. Nevertheless, these methods demand an extensive database for training, which may not always be readily accessible.

Unlike these methods, which work on the reconstruction of the sources, we focus on the reconstruction of the  sources spectrograms from the mixture, due to their pivotal role in determining sources frequency content. 
Accurate spectrogram estimation lays the groundwork for subsequent phase estimation, essential for signal reconstruction. To achieve this, we impose the \textit{consistency} constraint \cite{Yatabe2020} on the estimated spectrograms in our algorithm: indeed, since the STFT is not surjective, estimated spectrograms may not correspond to those of a signal. The consistency constraint ensures this.  In Section \ref{sec:modeldef}, we describe and introduce our model. We then present our method in Section \ref{sec:method}. Section \ref{se:results} presents the numerical experiments and results. Section \ref{sec:concl} pertains to the conclusion.

\section{Models and Definitions}
\label{sec:modeldef}
Consider a mixture $z$ consisting of two distinct nonstationary signals $x$ and $y$:
\begin{equation}
z(t) = x(t) + y(t) ,
\end{equation}
where $x$ is a \emph{bumps signal} and $y$ is a \emph{multicomponent AM-FM signal} (see Definitions~\ref{def:x} and~\ref{def:y}).

\begin{defn}[Bumps signal]
\label{def:x}
A bumps signal $x$ is composed of a series of $K$ distinct impulses, called bumps. It is written:
\begin{equation}
\label{eq:model.x}
x(t) = \sum_{k=1}^K \varphi(t-t_k),
\end{equation}
where $t_k$ are the impulse times, and $\varphi$ defines an even, bounded, function, called bump, localized near the origin. 

We denote by $\Delta_\varphi$ the constant such that the support $\varphi$ is $[-\Delta_\varphi,\Delta_\varphi]$. Furthermore, the impulses times $t_k$ are assumed to be distinct: the smallest difference between two consecutive pulses is denoted by $\theta_\varphi$, i.e.,
\[
\theta_\varphi =\min_{k,k'} |t_k-t_{k'}|.
\]
\end{defn}

\begin{defn}[Multicomponent AM-FM signal]
\label{def:y}
A multicomponent AM-FM signal $y$ is defined as: 
\begin{equation}
\label{eq:model.y}
y(t) = \sum_{\ell=1}^L A_\ell(t)\sin(2\pi\phi_\ell(t)),
\end{equation}
for $L \in \mathbb{N}$, $A_\ell$ and $\phi_\ell$ are respectively instantaneous amplitude and phase of mode $\ell$ satisfying: $A_\ell(t)>0$, $\phi'_\ell(t)>0$ and $\phi'_{\ell+1}(t)>\phi'_{\ell}(t)$ for all $t$, where $\phi'_\ell$ is referred to as the instantaneous frequency.
\end{defn}
The bumps component is localized around the impulse times $t_k$, and vanishes elsewhere, characterizing brief and irregular events. The signal $x$ is therefore well localized in time. In contrast, signal $y$ is locally periodic, making it well-localized in the frequency domain. Hence, time-frequency representations appear to be suited to isolate the contributions of each component separately.

\begin{defn}[Short-Time Fourier Transform (STFT)]
Let $x\in L^2(\RR)$ the analyzed signal, $g\in L^2(\RR)$ the analysis window. The STFT $T_x$  of $x$ is defined by: 
\begin{equation}
\label{eq:stft}
T_x(\nu,\tau) = \int_{\RR} x(t) g(t-\tau) e^{-\ii 2\pi\nu t} \dd t .
\end{equation}
The spectrogram, denoted by $S_x$, is the squared modulus of the STFT, that is,
\begin{equation}
S_x(\nu,\tau) = \left| T_x(\nu,\tau) \right|^2 .
\end{equation}
\end{defn}

\section{Method}
\label{sec:method}
\subsection{Asymptotic results}
\label{sse:asymptotic}

In this section, we provide the asymptotic behaviors of the spectrograms of signals as defined in Definitions~\eqref{def:x} and~\eqref{def:y}.

\subsubsection{Spectrogram of the bumps signal}
\begin{thm}\label{th:approx.dSx}
Let $x$ be a signal defined by ~\eqref{eq:model.x}. Choose a compactly supported, differentiable analysis window $g$, with $\mathrm{supp}(g)=[-\Delta_g,\Delta_g]$, so that it satisfies
\begin{equation}
\Delta_\varphi<\Delta_g<\dfrac{\theta_\varphi}{2}-\Delta_\varphi .
\label{eq:condition.Delta}
\end{equation}
Then, the partial derivative of its spectrogram with respect to frequency, denoted by $\partial_\nu S_x$, satisfies the following equation:
\begin{align}
\nonumber
\partial_\nu S_x(\nu,\tau)\! =\! 2\Real\Bigg(\!\sum_{k=1}^K& \left( g(t_k-\tau)\hat\varphi(\nu) + \epsilon_1(\nu,\tau-t_k) \right) \\
&\hspace{-24pt}\times\left( g(t_k-\tau) \overline{\hat\varphi'(\nu)} + \epsilon_2(\nu,\tau-t_k) \right)\Bigg)
\label{eq:dSx}
\end{align}
\vspace*{-0.7cm}
where
\begin{align*}
|\epsilon_1(\nu,\tau)|\leq & \Delta^2_\varphi\|\varphi\|_\infty\|g'\|_\infty, \\
|\epsilon_2(\nu,\tau)|\leq & \frac43\pi \Delta^3_\varphi\|\varphi\|_\infty\|g'\|_\infty,
\end{align*}
and $\Real(.)$  denotes the real part.
\end{thm}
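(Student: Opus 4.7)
The plan is to exploit two facts: condition~\eqref{eq:condition.Delta} forces the single-bump STFTs to have pairwise disjoint supports in $\tau$, and a first-order Taylor expansion of $g$ inside the STFT integral captures the $\hat\varphi$ dependence up to controllable remainders. First I would linearize the STFT as $T_x(\nu,\tau)=\sum_{k=1}^K T_k(\nu,\tau)$ with $T_k$ the STFT of $\varphi(\cdot-t_k)$. The integrand of $T_k(\nu,\tau)$ is supported where $[t_k-\Delta_\varphi,t_k+\Delta_\varphi]$ meets $[\tau-\Delta_g,\tau+\Delta_g]$, so $T_k(\nu,\tau)=0$ whenever $|\tau-t_k|>\Delta_g+\Delta_\varphi$. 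Condition~\eqref{eq:condition.Delta} gives $2(\Delta_g+\Delta_\varphi)<\theta_\varphi\leq|t_k-t_{k'}|$ for $k\neq k'$, so the $\tau$-supports of distinct $T_k$ are disjoint. Hence at most one term is nonzero at each $(\nu,\tau)$, giving $S_x=\sum_k|T_k|^2$ and therefore $\partial_\nu S_x=2\Real\sum_k T_k\,\overline{\partial_\nu T_k}$.

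Next, via the change of variable $u=t-t_k$ I would factor out the carrier phase: $T_k(\nu,\tau)=e^{-\ii 2\pi\nu t_k}A_k(\nu,\tau)$ with $A_k(\nu,\tau)=\int\varphi(u)g(u+t_k-\tau)e^{-\ii 2\pi\nu u}\,\dd u$. Differentiating $T_k$ in $\nu$ produces an extra term $-\ii 2\pi t_k T_k$, which contributes a purely imaginary piece $\ii 2\pi t_k|A_k|^2$ to $T_k\overline{\partial_\nu T_k}$ that drops under $\Real(\cdot)$, so $\Real(T_k\overline{\partial_\nu T_k})=\Real(A_k\overline{\partial_\nu A_k})$. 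I would then apply the mean value theorem to $g$: for $|u|\leq\Delta_\varphi$, $g(u+t_k-\tau)=g(t_k-\tau)+u\,g'(\xi_u)$ with $\xi_u$ between $t_k-\tau$ and $t_k-\tau+u$. Substituting into $A_k$ isolates the leading term $g(t_k-\tau)\hat\varphi(\nu)$ and leaves a remainder $\epsilon_1(\nu,\tau-t_k)=\int\varphi(u)u\,g'(\xi_u)e^{-\ii 2\pi\nu u}\,\dd u$ whose modulus is bounded by $\|\varphi\|_\infty\|g'\|_\infty\int_{-\Delta_\varphi}^{\Delta_\varphi}|u|\,\dd u=\Delta_\varphi^2\|\varphi\|_\infty\|g'\|_\infty$.

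For $\partial_\nu A_k$ I would differentiate under the integral, bringing down a factor $-\ii 2\pi u$; the same Taylor expansion yields $\partial_\nu A_k=g(t_k-\tau)\hat\varphi'(\nu)+\overline{\epsilon_2(\nu,\tau-t_k)}$, where $\epsilon_2$ is the conjugate of $-\ii 2\pi\int\varphi(u)u^2 g'(\xi_u)e^{-\ii 2\pi\nu u}\,\dd u$ and hence $|\epsilon_2|\leq 2\pi\|\varphi\|_\infty\|g'\|_\infty\int_{-\Delta_\varphi}^{\Delta_\varphi}u^2\,\dd u=\tfrac{4}{3}\pi\Delta_\varphi^3\|\varphi\|_\infty\|g'\|_\infty$. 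Expanding the product $A_k\overline{\partial_\nu A_k}=(g(t_k-\tau)\hat\varphi(\nu)+\epsilon_1)(g(t_k-\tau)\overline{\hat\varphi'(\nu)}+\epsilon_2)$ and summing over $k$ reproduces~\eqref{eq:dSx}.

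The main subtlety to watch is showing that the two remainders genuinely depend only on $(\nu,\tau-t_k)$, so that the notation $\epsilon_j(\nu,\tau-t_k)$ is well defined: this follows because $\xi_u$ lies between $t_k-\tau$ and $t_k-\tau+u$, hence $t_k$ and $\tau$ enter only through $t_k-\tau$. Secondary care is needed to use $\Delta_g>\Delta_\varphi$ in order to ensure the Taylor node $\xi_u$ remains inside an interval where $g'$ is controlled, and to verify that differentiation under the integral sign is justified since $\varphi, g$ are bounded with compact support.
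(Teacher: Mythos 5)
Your proposal is correct and follows essentially the same route as the paper's proof: decompose $T_x$ into single-bump STFTs, use condition~\eqref{eq:condition.Delta} to make their $\tau$-supports disjoint so the cross terms vanish and $S_x=\sum_k|T_\varphi(\nu,\tau-t_k)|^2$, then apply the mean value theorem to $g$ inside the integral to extract $g(t_k-\tau)\hat\varphi(\nu)$ and $g(t_k-\tau)\hat\varphi'(\nu)$ with the stated remainder bounds. Your explicit handling of the carrier phase $e^{-\ii 2\pi\nu t_k}$ and of the well-definedness of $\epsilon_j(\nu,\tau-t_k)$ is slightly more careful than the paper's, but it is the same argument.
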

\begin{proof}\renewcommand{\qedsymbol}{}
See Appendix.
\end{proof}
Theorem~\ref{th:approx.dSx} shows that the partial derivative of the spectrogram of $x$ with respect to frequency is controlled by the spread of the bump $\varphi$ and the smoothness of the window function $g$, among others.
Inequality~\eqref{eq:condition.Delta} specifies that the analysis window can encompass each bump separately from neighboring bumps. Then, the more localized the bump $\varphi$, the smaller the error terms $\epsilon_1$ and $\epsilon_2$. For the sake of simplicity, we assume these terms are negligible and focus on the remaining cross-terms $|g(t_k-\tau)|^2\hat\varphi(\nu) \overline{\hat\varphi'(\nu)}$. They are nonzero in the vicinity of $t_k$ only. However, since $\varphi$ is localized in time, the uncertainty principle, stipulates that $\hat\varphi$ is spread in frequency. Consequently, its derivative $\hat\varphi'$ remains small over the frequency band considered by the STFT.

\subsubsection{Spectrogram of the multicomponent AM-FM signal}
The properties of the STFT of AM-FM signals have been well-studied for decades. As a matter of fact, the quest for the ideal time-frequency representation has motivated the construction of adaptive methods such as the empirical mode decomposition~\cite{Huang1998empirical}, the reassignment or the synchrosqueezing transform (see~\cite{Auger2013time} for a review). The ideal time-frequency representation (ITF) of a multicomponent AM-FM signal is written as:
\begin{equation}
\label{eq:itf}
\mathrm{ITF}_x(\nu,\tau) = \sum_{\ell=1}^L  A_\ell(\tau) \delta( \nu-\phi'_\ell(\tau)) 
\end{equation}
The instantaneous spectral content of $x$ at time $\tau$ is precisely localized at the instantaneous frequencies. The contribution of each component is weighed by its instantaneous amplitude.

Sparsity is a key property of such a representation. As shown by Equation~\eqref{eq:itf}, the ITF is a sparse representation in the time-frequency plane. This observation has led to the construction of inverse problems whose solution is sparse and approximates the ITF~\cite{Kowalski2018convex}.

\subsubsection{Spectrogram of the mixture}
The purpose of this work is to separate the contributions of each component to the spectrogram of the observations. Since
\[
S_z = S_x + S_y + 2\Real(T_x \overline{T_y}),
\]
we will neglect the interactions between the STFT of $x$ and $y$, and assume that the cross term is negligible. This assumption,  based on the models in Definitions~\ref{def:x} and~\ref{def:y}, is numerically justified in Section~\ref{se:results}---a mathematical proof is part of our upcoming work.

\subsection{Inverse Problem}
\subsubsection{Problem Statement}
Given the approximated behaviors of the spectrograms $S_x$ and $S_y$ provided in Section~\ref{sse:asymptotic}, we estimate them by solving:
\begin{equation}
\hat S_x, \hat S_y =\!\arg\!\min_{S_x,S_y} \!\| S_z - \!(S_x+S_y) \|_2^2 + \!\lambda \|\partial_\nu S_x \|_2^2 + \mu \| S_y \|_1 
\label{eq:problem-formulation}
\end{equation}

The first term, called data fidelity, quantifies the difference between the observed spectrogram and the sum of the approximated spectrograms. It ensures that the solution accurately reconstructs or fits the observed data. The second term introduces the regularization for $S_x$. It promotes the smoothness of $S_x$ as described in Section~\ref{sse:asymptotic}. The third term encourages sparsity in $S_y$, implying that most elements of $S_y$ should be zero. The parameters $\lambda$ and $\mu$ are tuning parameters that balance the importance of fitting the data, controlling the smoothness of $S_x$, and promoting sparsity in $S_y$.

We optimize Problem \eqref{eq:problem-formulation} using an alternating minimization strategy, alternating updates between $S_x$ and $S_y$. The initial estimate $S_y^{(0)}$ serves as the starting point for this iterative process. At iteration $k$, the goal is to find $  S_x^{(k)}$ that minimizes the following cost function:
\begin{equation}
  S_x^{(k)}= \! \arg\min_{S_x} \| S_z -S_x - S_y^{(k-1)} \|_2^2 + \lambda \|\partial_\nu S_x \|_2^2.
  \label{eq:pb.Sx}
\end{equation} 
In the next step, the cost function is updated with the current estimate of  $S_x^{(k)}$ and the algorithm optimizes over $S_y$:
\begin{equation}
   S_y^{(k)}=\arg\min_{S_y} \| S_z -S_x^{(k)}-S_y \|_2^2 + \mu \| S_y \|_1
   \label{eq:pb.Sy}
\end{equation}
For practical applications and the sake of clarity, we consider the following the discrete version of STFT, while keeping the same notations. Due to the quadratic nature of the cost function in \eqref{eq:pb.Sx}, canceling out its gradient leads to:  

\begin{equation}
 S_x^{(k)} =  \left(I+ \lambda B^TB \right)^{-1}\left( S_z -  S_y^{(k-1)} \right),
\label{eq:Sx.update}
\end{equation}
with $B$ the matrix obtained after discretization of $\partial_\nu S_x$:
$$B= \begin{pmatrix}
-1 & 1 & 0 & \cdots & 0 \\
0 & -1 & 1 & \cdots & 0 \\
0 & 0 & -1 & \ddots & \vdots \\
\vdots & \vdots & \ddots & \ddots & 1 \\
0 & 0 & \cdots & 0 & -1 \\    
\end{pmatrix}$$
We solve \eqref{eq:pb.Sy} using the FISTA algorithm~\cite{Beck2009}, which is well-suited for tackling such inverse problems. For more details on the iterations of the algorithm, we refer the reader to~\cite{Beck2009}. 

\subsubsection{Spectrogram consistency} 
In the discrete case as mentioned above, a complex valued-matrix $X$  is said to be \textit{consistent} if $\Pi(X)=X$, where $\Pi = \mathrm{STFT} \circ  \mathrm{STFT}^{-1} $. $\Pi$ is a projector on the subspace of consistent matrices. Since we estimate the spectrograms, adhering to the \textit{consistency} constraint is imperative. Therefore, at each stage of estimating $S_x$ and $S_y$, we incorporate the phases of the mixture $S_z$ and project onto $\Pi$.

\subsubsection{Estimation Algorithm}

Algorithm~\ref{alg:algo} presents the proposed alternate estimation algorithm. Initialization is set so that $S_y^{(0)}=0$. The algorithm stops when it reaches the maximum number of iterations $K$ or when the relative change in the cost function~\eqref{eq:problem-formulation}, denoted by $\rho^{(k)}$, falls below the threshold $\Theta$.

\begin{algorithm}
\caption{Consistent Spectrogram Separation}
\label{alg:algo}
\begin{algorithmic}
\Require $\lambda$, $\mu$ \Comment{Hyperparameters}
\Require $K$, $\Theta$ \Comment{Algorithm parameters}
\Require $T_z$ and $S_z$ \Comment{Input STFT and spectrogram}
\State $S_y^{(0)} \gets 0$ \Comment{Initialization}
\State $k \gets 1$
\While{$k \leq K$ and $\rho^{(k)}\geq \Theta$}
\State $S_x^{(k)} \gets$ Output of \eqref{eq:Sx.update}  \Comment{$S_x$ update}
\State $T_x^{(k)} \gets \sqrt{S_x^{(k)}} \arg(T_z)$
\State $S_x^{(k)} \gets |\Pi(T_x^{(k)})|^2$ \Comment{Consistency of $S_x$}
\State $S_y^{(k)} \gets$ FISTA on \eqref{eq:pb.Sy} \Comment{$S_y$ update}
\State $T_y^{(k)} \gets \sqrt{S_y^{(k)}} \arg(T_z)$
\State $S_y^{(k)} \gets |\Pi(T_y^{(k)})|^2$ \Comment{Consistency of $S_y$}
\State $k \gets k+1$
\EndWhile
\end{algorithmic}
\end{algorithm}

\section{Numerical experiments}
\label{se:results}

We implement the proposed algorithm to both a synthetic mixture and a real audio signal. For the sake of reproducibility, the codes produced for this article are available on GitHub\footnote{\url{https://github.com/AdMeynard/SpectrogramSeparation}}.

\subsection{Application to a synthetic signal}

We generate a 1-second mixture, sampled at $F_\mathrm{s}=2^{14}$~Hz (i.e., $2^{14}$ samples). The bumps signal is composed of about 20 randomly spaced bumps, while ensuring that $\theta_\varphi\geq 35$~ms. The bumps have a Hann window shape, where $\Delta_\varphi=0.55$~ms (i$9$ samples). The AM-FM signal has a single component, with a constant amplitude, and an instantaneous frequency defined by
\[
\phi'_1(t) = f_0\left( 1+\frac{t}{2}\sin(2\pi t) \right),
\]
where $f_0=1.5$~kHz is the central frequency. Figure~\ref{fig:Sz} shows the resulting mixture spectrogram. The analysis window is a Hann window of length $2\Delta_g=31.5$~ms. This satisfies Inequality~\eqref{eq:condition.Delta}, ensuring non-interference between the bumps in the spectrogram of the bumps signal. 

\begin{figure}
\centering
\includegraphics[width=\columnwidth]{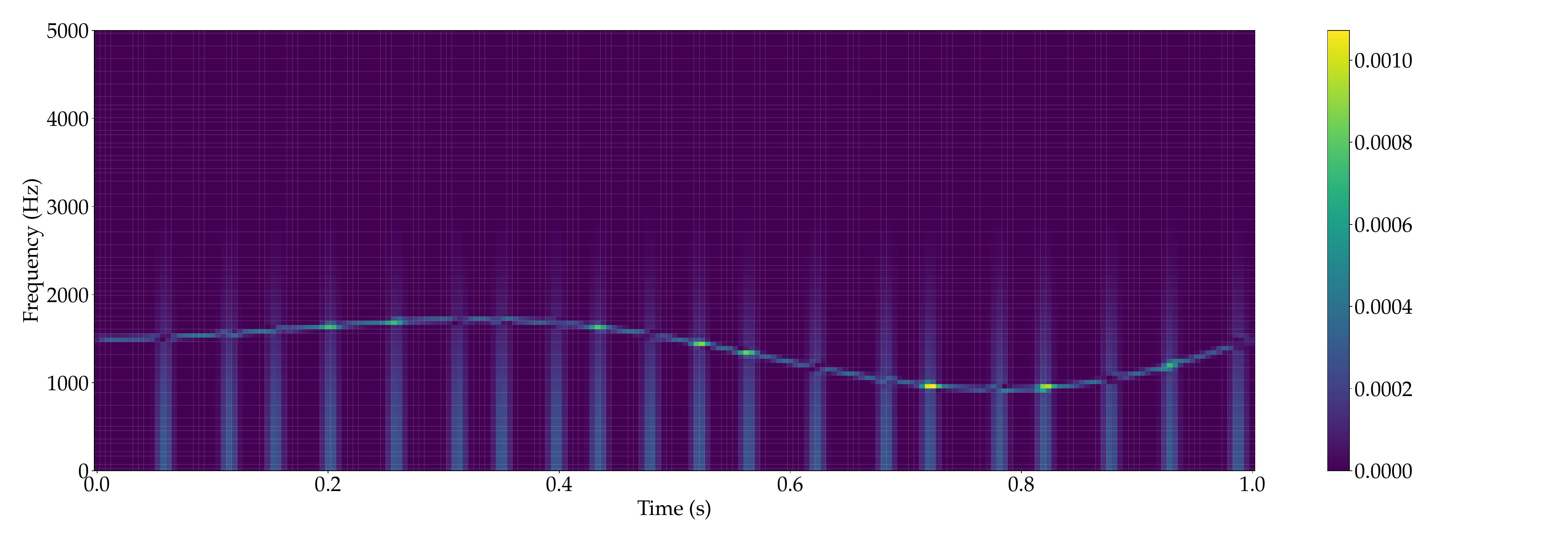}
\caption{Spectrogram of the synthetic mixture.}
\label{fig:Sz}
\end{figure}

We run the proposed algorithm with $\Theta=0.1\%$. We observe that the algorithm converges in $52$ iterations. Figure~\ref{fig:Sx.Sy} shows the estimated spectrograms of $x$ and $y$ in the left panels. 
We note that the two components are indeed separated in the time-frequency plane. The intrinsic characteristics of the spectrograms are well preserved, with smoothness regarding frequency for the first component, and sparsity for the second. However, challenges arise in regions of overlap, making component separation more challenging and leading to some distortions. 
The spectrograms estimated using NMF are shown in the right panels of Figure~\ref{fig:Sx.Sy}.
While NMF accurately captures the vertical patterns of the first component, the estimated second component exhibits a spectrogram resembling a discontinuous horizontal line, diverging significantly from the expected pattern. The additional degree of freedom offered by our method allows for a more accurate decomposition of the spectrogram. Indeed, the norm of residuals $\| S_z - S_x^{(k)}-S_y^{(k)} \|_2$ after convergence is $7.8\times 10^{-5}$ with our algorithm while it is $1.4\times 10^{-2}$ with NMF.
 

\begin{figure}
    \centering
    \includegraphics[width=.49\columnwidth]{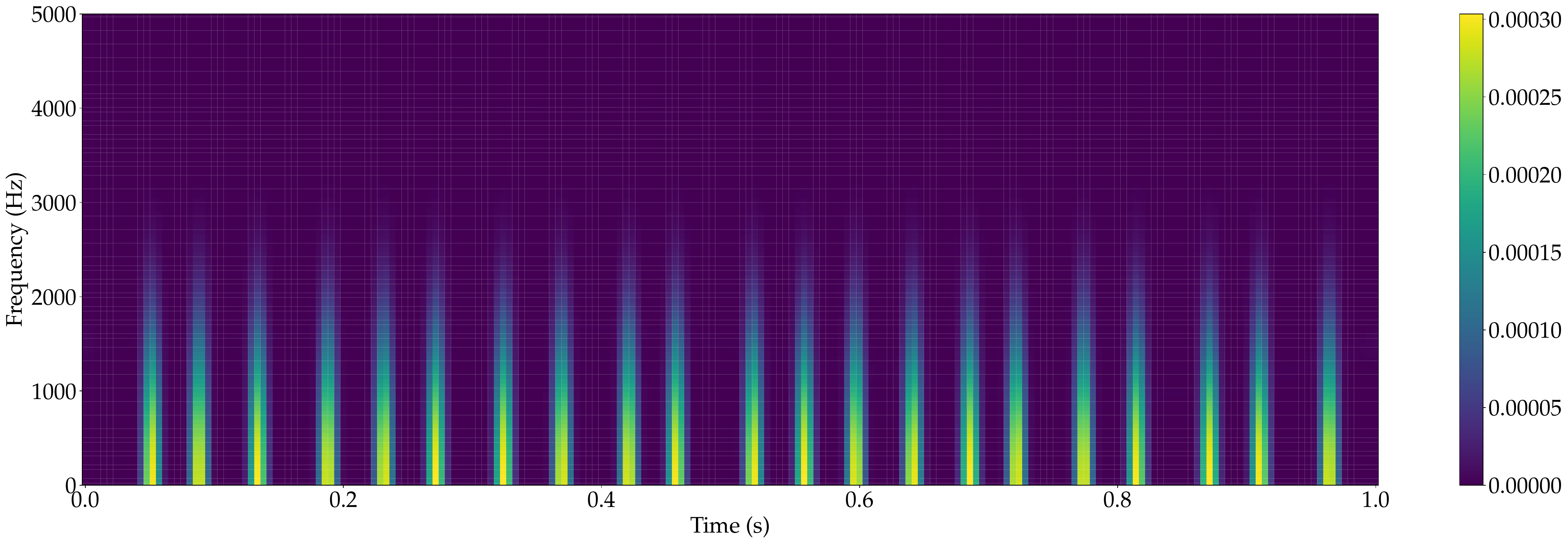}
    \includegraphics[width=.49\columnwidth]{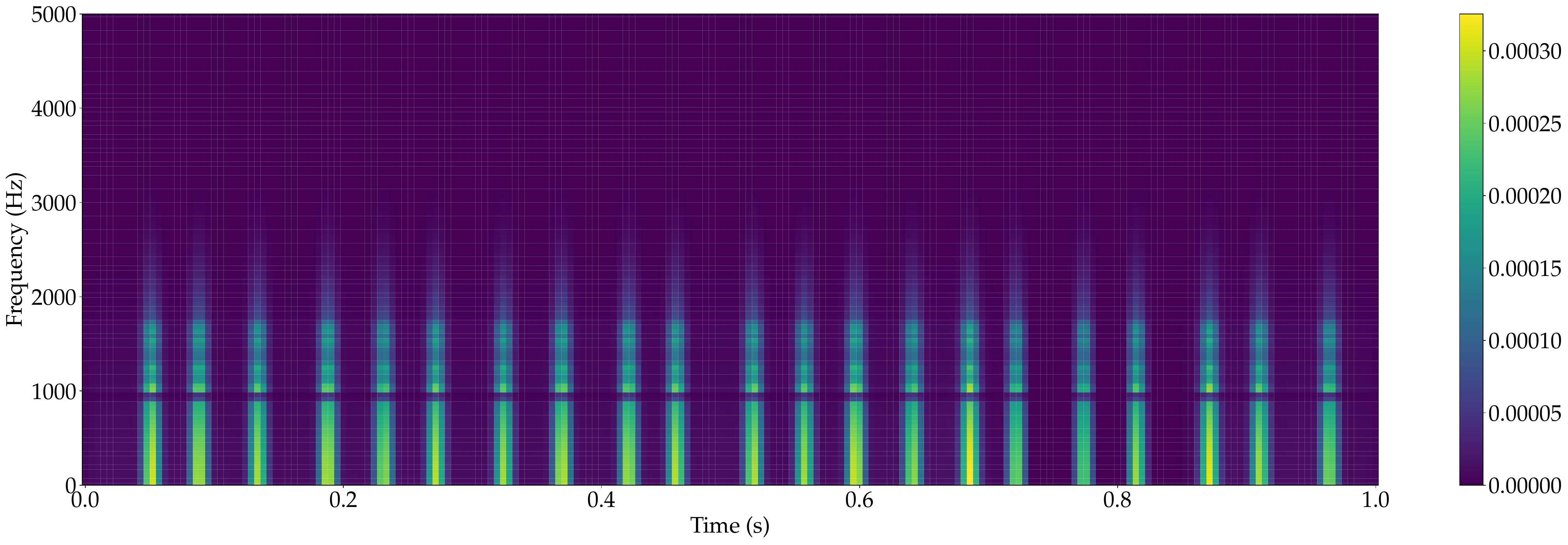}
    \includegraphics[width=.49\columnwidth]{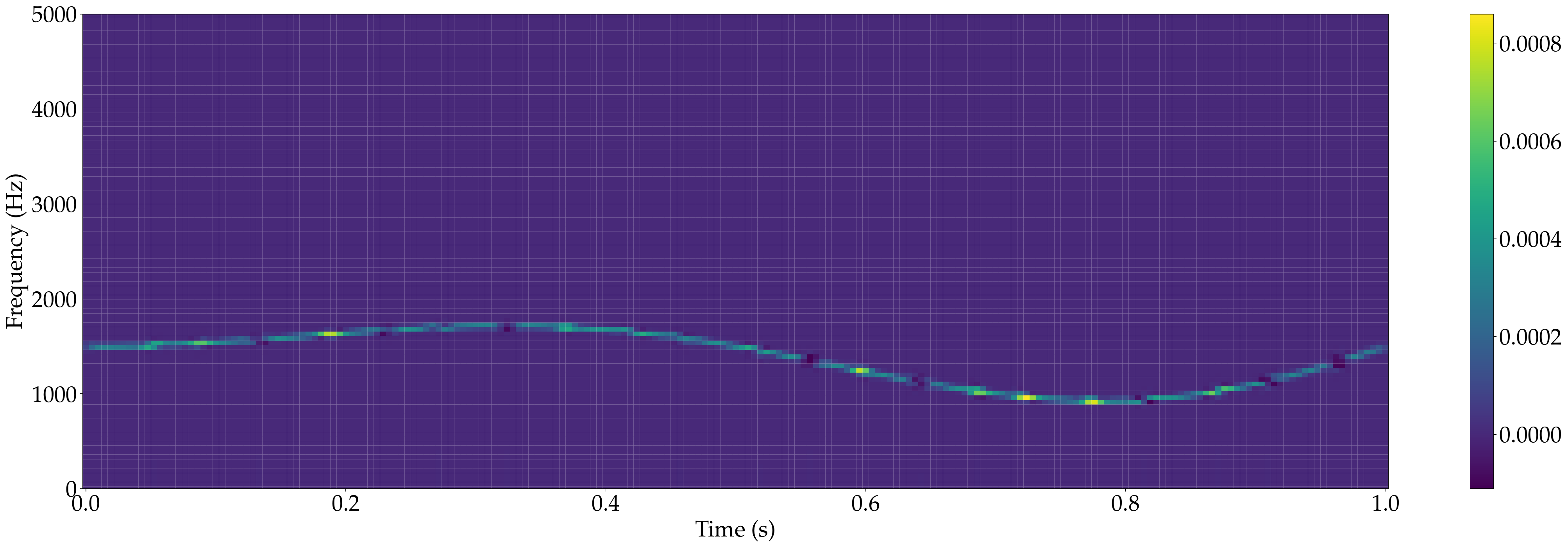}
    \includegraphics[width=.49\columnwidth]{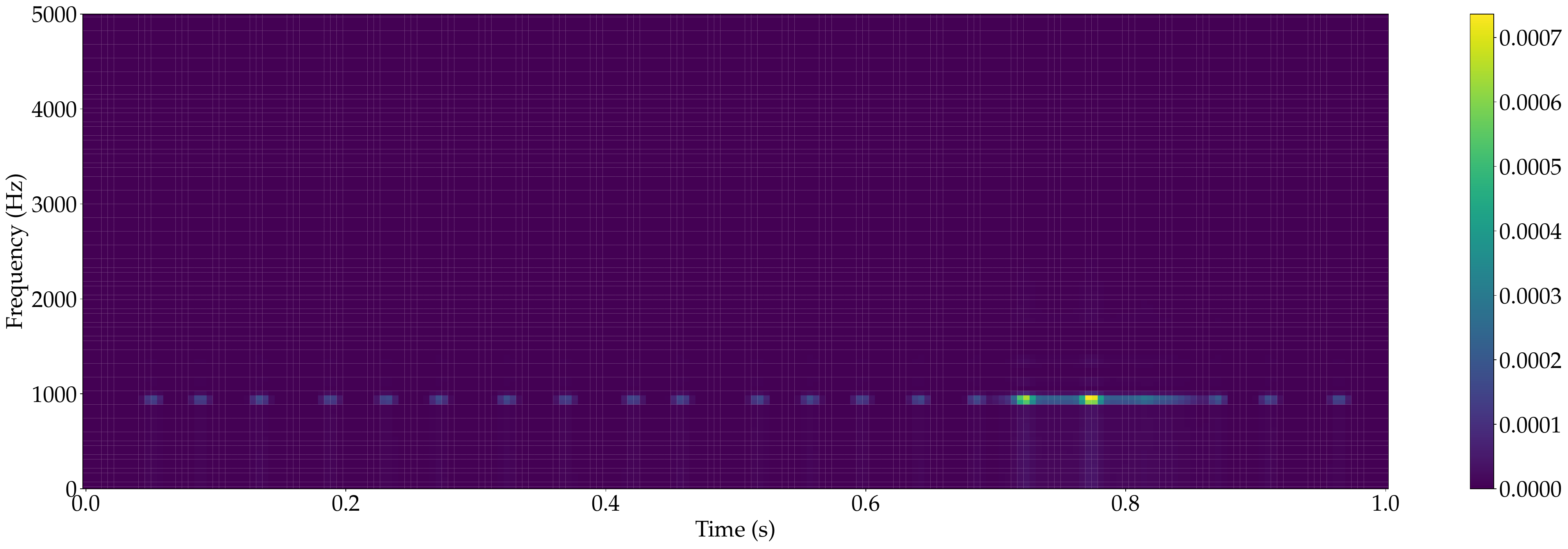}
    \caption{Estimated spectrograms of the bumps signal (top) and the AM-FM signal (bottom). The spectrograms on the left are obtained by our method, those on the right by NMF.}
    \label{fig:Sx.Sy}
\end{figure}

\subsection{Application to an audio signal}

We analyze a $3.5$-second recording of a vocalizing Atlantic spotted dolphin. It is sourced from the Watkins Marine Mammal Sound Database~\cite{Sayigh2016watkins}. The sound the dolphin produces is made up of \emph{whistles} and \emph{clicks}. The bumps signal models the clicks, whereas the AM-FM component models the whistles. The spectrogram of the sound is shown in Figure~\ref{fig:Sz.audio}. The contributions of the two components are visible in the time-frequency plane.

\begin{figure}
    \centering
    \includegraphics[width=\columnwidth]{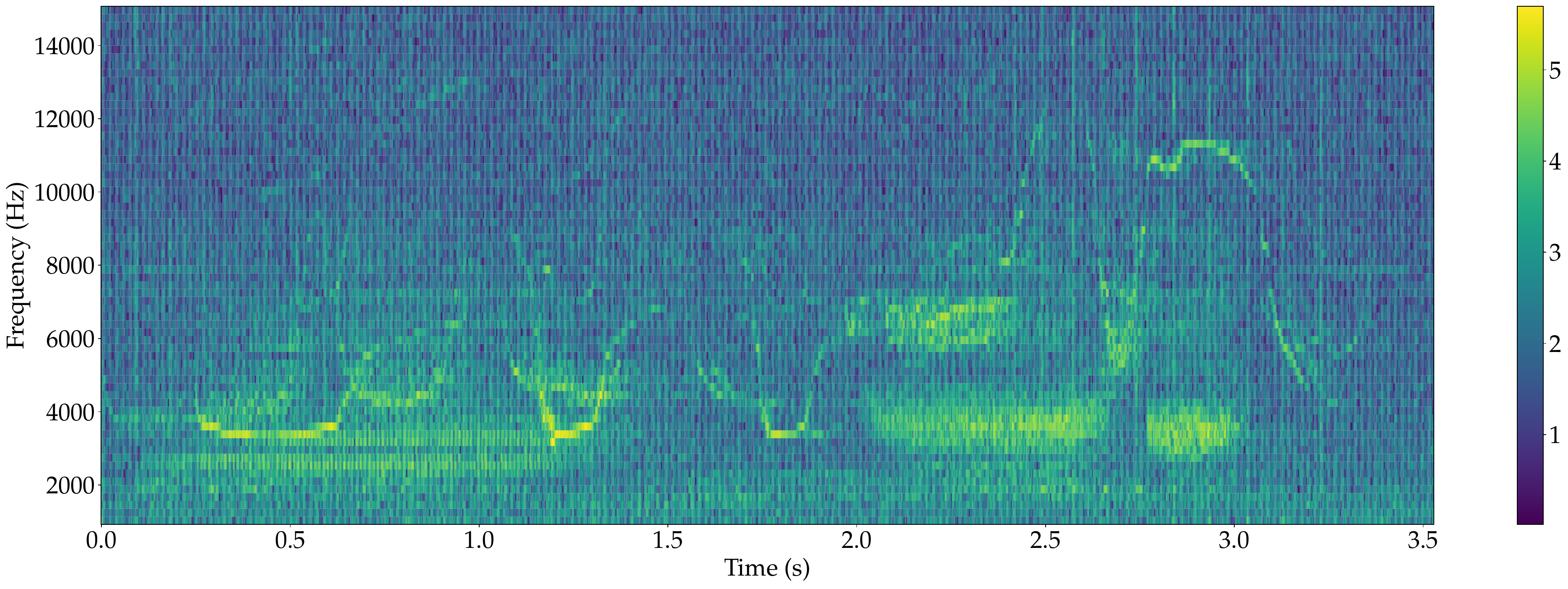}
    \caption{Spectrogram of the recording of a vocalizing dolphin.}
    \label{fig:Sz.audio}
\end{figure}

We set $\lambda=0.1$ and $\mu=0.00001$. The algorithm converges in $165$ iterations. The estimated spectrograms, shown in Figure~\ref{fig:results.dolphin},  illustrate the ability of our technique to isolate the two components in the time-frequency plane. On the one hand, the estimated spectrogram of the bumps component is a frequency-spread representation, especially visible during click periods (e.g., in the 2--3~s interval). On the other hand, the estimated spectrogram of the AM-FM signal captures the frequency-localized whistles (visible, e.g., around 4-6~kHz in the 0--2~s interval). This example shows the extent to which this tool can be used in bioacoustics to better visualize and interpret the two types of sound emitted simultaneously by a dolphin.

\begin{figure}
    \centering
    \includegraphics[width=.49\columnwidth]{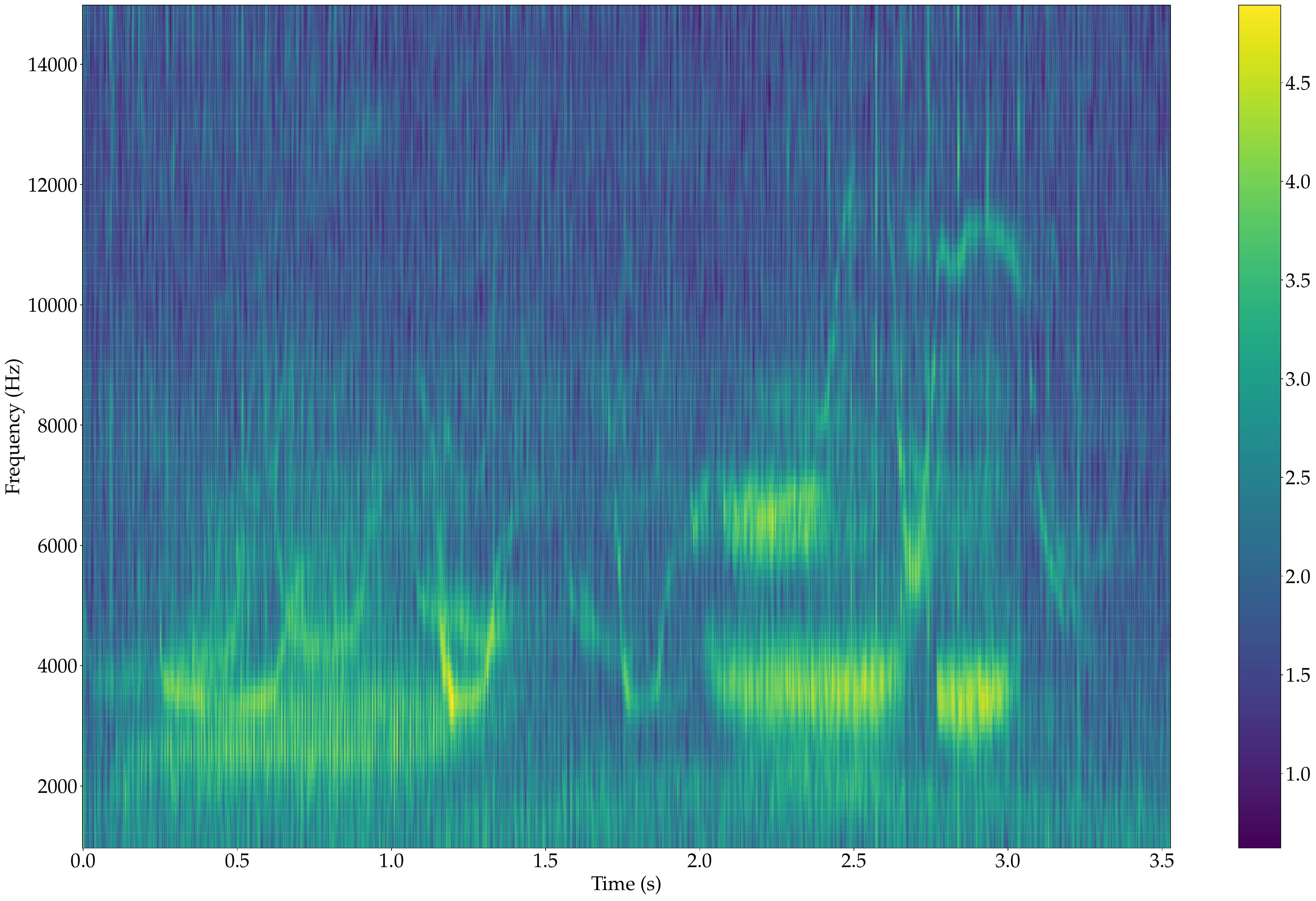}
    \includegraphics[width=.49\columnwidth]{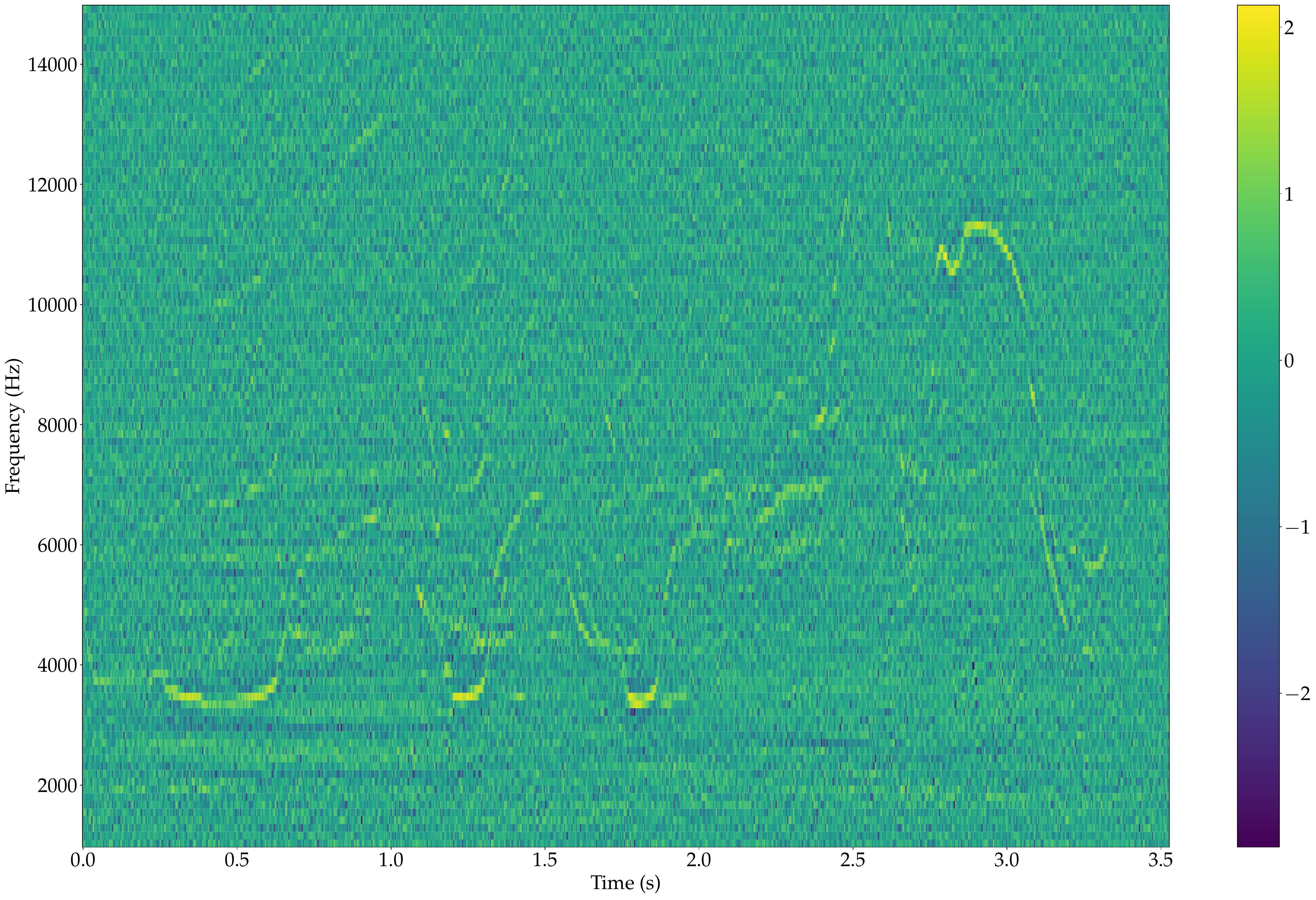}
    \caption{Estimated spectrograms of the clicks (left) and the whistles (right) of a vocalizing dolphin.}
    \label{fig:results.dolphin}
\end{figure}

\section{Conclusion and Perspectives}
\label{sec:concl}
We have presented a consistent spectrogram separation technique, adapted to a specific type of nonstationary mixtures composed of a bump component and an AM-FM component. To address this problem, we leveraged the properties of the time-frequency representations of such mixtures. We then constructed an inverse problem to estimate the spectrograms of the two components. Finally, we propose an alternating optimization algorithm that we test on a synthetic mixture. We show that our method outperforms the standard NMF decomposition. An application to a bioacoustic signal is also presented, demonstrating the real-world applicability of the technique. 

This work represents a step towards the construction of a single-channel source separation method for reconstructing the two components themselves. We plan to use the estimated spectrograms as time-frequency masks to estimate the STFTs of each component. Our previous work is particularly well-suited to this type of problem~\cite{Kreme2021}. An inverse transform will then be used to reconstruct the time signals.

\section*{Appendix: Proof of Theorem~\ref{th:approx.dSx}}
\label{se:proof}
\begin{proof}
By incorporating the bumps signal expression~\eqref{eq:model.x} into the STFT definition~\eqref{eq:stft}, we obtain
\begin{align*}
T_x(\nu,\tau) &= \sum_{k=1}^K\int_{\RR} \varphi(t-t_k) g(t-\tau) e^{-\ii 2\pi\nu t} \dd t \\
&= \sum_{k=1}^K \int_{\RR} \varphi(u) g(u-(\tau-t_k)) e^{-\ii 2\pi\nu (u-t_k)} \dd u \\
&= \sum_{k=1}^K e^{+\ii 2\pi\nu t_k} T_\varphi(\nu,\tau-t_k) .
\end{align*}
The second row is obtained by changing the variable $u=t-t_k$. Besides, since $\varphi$ and $g$ are compactly supported functions, $T_\varphi(\nu,\tau-t_k)$ is nonzero only if the supports of $\varphi$ and $g(\cdot-\tau-t_k)$ intersect. This condition is written
\begin{equation}
t_k-(\Delta_g+\Delta_\varphi)<\tau< t_k + \Delta_g+\Delta_\varphi .
\label{eq:condition.tau}
\end{equation}
Hence, $T_\varphi(\nu,\tau-t_k)$ and $T_\varphi(\nu,\tau-t_{k'})$ are never simultaneously nonzero if there is no value of $\tau$ where~\eqref{eq:condition.tau} is satisfied for $t_k$ and $t_{k'}$ simultaneously. We end up with the condition
\[
|t_k-t_{k'}|<2(\Delta_g+\Delta_\varphi),
\]
which is always satisfied when~\eqref{eq:condition.Delta} is true. Therefore, the spectrogram is given by
\begin{equation*}
S_x(\nu,\tau) =\! \left| \sum_{k=1}^K e^{\ii 2\pi\nu t_k} T_\varphi(\nu,\tau-t_k) \right|^2 \!=\! \sum_{k=1}^K \left| T_\varphi(\nu,\tau-t_k) \right|^2.
\end{equation*}
This yields
\begin{equation}
\partial_\nu S_x(\nu,\tau) = 2\Real\!\left( \sum_{k=1}^K T_\varphi(\nu,\tau\!-\!t_k)\overline{\partial_\nu T_\varphi(\nu,\tau\!-\!t_k)} \right).
\label{eq:Sx}
\end{equation}
Besides, the analysis window had wider support than $\varphi$. That is why we take advantage of the mean value theorem to write:
\begin{equation}
\exists 0<\vartheta(t)<t,\quad g(t-\tau) = g(-\tau)+t g'(\vartheta(t)-\tau).
\end{equation}
Hence,
\begin{align}
\nonumber
T_\varphi(\nu,\tau) &= \int_{-\Delta_\varphi}^{\Delta_\varphi}\varphi(t) g(t-\tau) e^{-\ii 2\pi\nu t} \dd t \\
&= g(-\tau)\hat\varphi(\nu) + \epsilon_1(\nu,\tau),
\label{eq:Sphi}
\end{align}
where $\epsilon_1(\nu,\tau-t_k)=\int_{-\Delta_\varphi}^{\Delta_\varphi}\varphi(t) t g'(\vartheta(t)-\tau) e^{-\ii 2\pi\nu t} \dd t$.
The same reasoning yields
\begin{equation}
\partial_\nu T_\varphi(\nu,\tau) = g(-\tau)\hat\varphi'(\nu) + \epsilon_2(\nu,\tau-t_k),
\label{eq:dnu.Sphi}
\end{equation}
where $\epsilon_2(\nu,\tau)=-\ii 2\pi\int_{-\Delta_\varphi}^{\Delta_\varphi}\varphi(t) t^2 g'(\vartheta(t)-\tau) e^{-\ii 2\pi\nu t} \dd t$.
Inserting results~\eqref{eq:Sphi} and~\eqref{eq:dnu.Sphi} into Equation~\eqref{eq:Sx} leads to the main result~\eqref{eq:dSx} of the theorem. Finally, we bound the error terms as follows:
\begin{align*}
|\epsilon_1(\nu,\tau)|&\leq \int_{-\Delta_\varphi}^{\Delta_\varphi}\|\varphi\|_\infty |t| \|g'\|_\infty \dd t = \Delta^2_\varphi\|\varphi\|_\infty\|g'\|_\infty \\
|\epsilon_2(\nu,\tau)|&\leq 2\pi\!\!\int_{-\Delta_\varphi}^{\Delta_\varphi}\!\!\|\varphi\|_\infty |t|^2 \|g'\|_\infty \dd t =  \frac43\pi\Delta^3_\varphi\|\varphi\|_\infty\|g'\|_\infty.
\end{align*}
\end{proof}

\bibliographystyle{ieeetr}
\bibliography{ref_eusipco24}

\end{document}